\newtheorem{proposition}{Proposition}
\newtheorem{lemma}{Lemma}
\begin{document}
\title{Efficient entanglement-assisted discrimination of a class of many-copy indistinguishable sets}

\author{Abhay Srivastav}
\affiliation{Harish-Chandra Research Institute, Chhatnag Road, Jhunsi, Prayagraj 211019, India}
\affiliation{Optics and Quantum Information Group, The Institute of Mathematical Sciences,\\
C.I.T. Campus, Taramani, Chennai 600113, India}
\affiliation{Homi Bhabha National Institute, Training School Complex, Anushakti Nagar, Mumbai 400094, India}

\author{Saronath Halder}
\email{saronath.halder@gmail.com}
\affiliation{Centre for Quantum Optical Technologies, Centre of New Technologies, University of Warsaw, Banacha 2c, 02-097
Warsaw, Poland} 

\begin{abstract}
We explore entanglement as a resource to distinguish locally indistinguishable orthogonal quantum states. Specifically, we consider sets which contain states from an unextendible product basis along with a pure entangled state. We establish a connection between the aforesaid problem and the entanglement-assisted discrimination of a certain class of many-copy indistinguishable sets. The entanglement-assisted protocols that we construct here are quite efficient, as they render the teleportation-based protocols sub-optimal. In fact, a central aspect of our study is to explore the role of Schmidt rank as a resource to distinguish the states of locally indistinguishable sets. Interestingly, we identify an instance where a set of locally indistinguishable orthogonal states remains locally indistinguishable even with access to any finite number of copies, yet becomes perfectly distinguishable using entangled resources of relatively low cost. This fact makes it possible to compare the degrees of local indistinguishability associated with several locally indistinguishable sets within the same Hilbert space. Consequently, we report a hierarchy of local indistinguishability among the many-copy indistinguishable sets. Thereafter, based on our analysis, we present a theoretical proposal for an information processing protocol exhibiting secure locking of information and its resource-efficient extraction. Furthermore, we also find that the hierarchical difference in local indistinguishability can increase with increasing dimension of the Hilbert space.
\end{abstract}
\maketitle

\section{Introduction}
In quantum information theory, local state discrimination problems are quite well known. To define such a problem, we consider a composite quantum system, the subsystems of which are distributed among spatially separated locations. The parties of those locations are only allowed to use local operations and classical communication (LOCC) \cite{Chitambar14}. However, the composite quantum system is prepared in an unknown state which is secretly taken from a given set. The task is to identify that state via LOCC. If the states of a set are orthogonal then by using a suitable global measurement it is always possible to distinguish the states. Thus, one can identify the state of the system perfectly. On the other hand, perfect discrimination of orthogonal states is not always possible by LOCC. This leads to the notion of local indistinguishability of orthogonal states by LOCC, and the corresponding set is called a locally indistinguishable set. This phenomenon was first observed for a set of two-qutrit pure orthogonal product states \cite{Bennett99-1, Bennett99}. The notion of local indistinguishability only arises for a set consisting of more than two pure states and does not feature in the discrimination of any two orthogonal pure states \cite{Walgate00, Walgate02}. There have been several studies on local indistinguishability and distinguishability of orthogonal product states \cite{DiVincenzo03,Horodecki03,Niset06,Hayashi06,Feng09,Zhang14,Chitambar14-1,Zhang15,Xu16,Croke17,Halder18,Halder19,Rout21,Li22,Shi22} as well as orthogonal entangled states \cite{Ghosh01,Ghosh02,Ghosh04,Fan04,Nathanson05,Watrous05,Owari06,Xin08,Bandyopadhyay10,Bandyopadhyay11,Yu12,Yang13,Cosentino14,Singal16,Singal17,Yuan20,Banik21,Ha23,Liu23}. Ref(s). \cite{Badziag03,Horodecki04} studied the local mutual information of such sets consisting of bipartite states. The locally indistinguishable sets have also been used for quantum data hiding schemes \cite{Rahaman15,Lami18,Lami21,Ha25}. In this work, we consider only orthogonal states, and we assume that the states in a given set are equally probable.

Given a locally indistinguishable set, an interesting research direction is to determine the amount of resource required to distinguish its states by LOCC. Quantum entanglement \cite{Horodecki09} is certainly a very useful resource to distinguish the states of such a set. The role of entanglement as an efficient resource for distinguishing certain locally indistinguishable sets was first examined in Ref. \cite{Cohen08}. The entanglement cost of nonlocal measurements for a particular class of states was subsequently analyzed in Ref. \cite{Bandyopadhyay09}. For bipartite systems, the maximally entangled state serves as a universal resource; however, for most systems involving three or more subsystems it was shown that no such universal resource exists \cite{Bandyopadhyay16}. Subsequent works have explored the role of entanglement in local state discrimination for both product states \cite{Zhang16,Zhang18,Halder18,Rout19,Li19,Li19-1,Shi20,Zhang20,Zhang21,Cao21,Zhang22,Zhou23,Bhunia23,Cao23,Qiao24,Wei24,Zhang25,Cao25} and entangled states \cite{Yu14,Gungor16,Bandyopadhyay18,Shi20-1,Yang20,Bandyopadhyay21,Lovitz22,Bandyopadhyay24,Zhu25,Lim25}. In fact, based on the consumptions of resources, it is possible to put a hierarchy among the degrees of local indistinguishability corresponding to several locally indistinguishable sets. Here, we study entanglement as a resource in local state discrimination problems, and based on this study, we explore a hierarchy of local indistinguishability. The motivation for considering entanglement-assisted local state discrimination problems can be explained as follows. In information processing protocols, we encode classical information in quantum states. To decode this information, we must be able to distinguish among these states. When such discrimination is not possible via LOCC, it becomes natural to consider the use of additional resources, such as entanglement.

Among locally indistinguishable sets, the unextendible product basis (UPB) \cite{Bennett99, DiVincenzo03} is special, as it is associated with different aspects of local indistinguishability \cite{Bennett99-1, Bennett99, Bandyopadhyay11}. The definition of a UPB can be given as the following: Consider a set of orthogonal product states that span a proper subspace of the considered Hilbert space. If the complementary subspace contains no product state, then the given product states form an unextendible product basis. Clearly, the complementary subspace is an entangled subspace. Since a UPB is a locally indistinguishable set \cite{Bennett99, DiVincenzo03, Cohen23}, it is natural to ask how much entanglement is sufficient to distinguish the states of a UPB by LOCC. This question was first addressed in Ref.~\cite{Cohen08}, where it was shown that the states of any two-qutrit UPB can be distinguished by LOCC using a two-qubit maximally entangled state as a resource. In this work, however, we consider a larger set that properly contains a UPB as a subset. We then examine whether a two-qubit maximally entangled state remains sufficient to distinguish the states of this larger set via LOCC. Since the larger set contains at least one entangled state in addition to the UPB states, distinguishing its elements is inherently more complex than discriminating the UPB alone.

The aforesaid problem is important for two key reasons. (i) Since objects like UPBs do not exist in a two-qubit space, the states that we are going to distinguish reside in higher-dimensional spaces. Nevertheless, we investigate their distinguishability using the minimal entangled resource, i.e., a two-qubit maximally entangled state. Clearly, the successful discrimination of the concerned states with such a resource would lead to more efficient protocols, rendering standard teleportation-based protocols sub-optimal. (ii) Furthermore, the aforesaid problem is also connected to the discrimination of states belonging to a class of many-copy indistinguishable sets. Such sets are defined in the following. 

Consider a locally indistinguishable set $\{\rho_1, \rho_2, \dots\}$. We ask whether these states can be distinguished by LOCC when multiple copies are available. That is, we consider the set $\{\rho_1^{\otimes n}, \rho_2^{\otimes n}, \dots\}$, where $n$ is a finite positive integer. If this new set remains locally indistinguishable for all finite $n$, then the original set $\{\rho_1^{\otimes n}, \rho_2^{\otimes n}, \dots\}$ is a locally indistinguishable set in the many-copy scenario. We refer to these sets as many-copy indistinguishable sets. For detailed discussion of such sets, see Refs.~\cite{Bandyopadhyay11, Yu14, Li17, Halder24}. In this work, we introduce a method to distinguish the states of a class of many-copy indistinguishable sets.. More precisely, we consider the set $\{\rho_1, \rho_2, \dots\}\otimes\ket{\xi}$ and ask for which state $\ket{\xi}$, it is possible to distinguish the states $\rho_1, \rho_2, \dots$ perfectly by LOCC. The entanglement-assisted discrimination of many-copy indistinguishable sets remains relatively unexplored. Prior to our work, Ref.~\cite{Yu14} showed that there exists a class of many-copy indistinguishable sets for which perfect local discrimination by LOCC requires an entangled resource state of maximal Schmidt rank. We mention that the Schmidt rank of a pure bipartite entangled state can be defined as the minimum number of product states needed to express that entangled state.

We are now ready to summarize our main findings. First, we explore entanglement as a resource to distinguish locally indistinguishable sets that contain states of a UPB along with a pure entangled state. We then extend our analysis to the entanglement-assisted discrimination of a class of many-copy indistinguishable sets.  Our entanglement-assisted protocols are quite efficient compared to the teleportation-based protocols. In fact, we explore the role of Schmidt rank \cite{Horodecki09} as a resource for local state discrimination. An interesting instance we uncover involves a class of locally indistinguishable sets which cannot be discriminated by LOCC even if any finite number of copies of the states are available. Nevertheless, these states can be perfectly distinguished by LOCC using only minimal entanglement, i.e., a two-qubit entangled state, as a resource. This insight enables us to compare the degrees of local indistinguishability associated with several many-copy indistinguishable sets within the same Hilbert space. Consequently, we establish a hierarchy of local indistinguishability among such sets. Finally, we present a theoretical proposal for an information processing protocol that demonstrates an application of our findings. We also observe that the hierarchical differences in local indistinguishability become more pronounced with increasing system dimension.

\section{Results}
\subsection{Two-qutrit states}
We begin by considering a set that contains two-qutrit orthogonal pure states. Specifically, we focus on the so-called Tiles UPB \cite{Bennett99}. The product states of this UPB along with an entangled state are given by-
\begin{equation}\label{eq1}
\begin{array}{l}
\ket{\Psi_1}  = \ket{0}_A\ket{0-1}_B,~\ket{\Psi_2}  = \ket{0-1}_A\ket{2}_B,\\[1 ex]
\ket{\Psi_3}  = \ket{2}_A\ket{1-2}_B,~\ket{\Psi_4}  = \ket{1-2}_A\ket{0}_B,\\[1 ex]
\ket{\Psi_5}  = \ket{0+1+2}_A\ket{0+1+2}_B,\\[1 ex]
\ket{\Psi_6}  = \ket{0}_A\ket{0+1}_B - \ket{0+1}_A\ket{2}_B,
\end{array}
\end{equation}
where the first subsystem belongs to (A)lice and the second subsystem belongs to (B)ob. Note that $\ket{\Psi_6}$ is orthogonal to all the product states of the UPB and is therefore picked from the entangled subspace. Here, $\ket{u\pm v\pm w}\equiv(1/N)(\ket{u}\pm\ket{v}\pm\ket{w})$, where $N$ is the normalization factor. We do not consider these factors here, as they do not play any role in the discrimination process and can therefore be omitted from the present analysis.

It is well known that the Tiles UPB is a locally indistinguishable set \cite{Bennett99, Fu14, Cohen22}. Consequently, the set of (\ref{eq1}) also remains locally indistinguishable. For this set, we present the following result.

\begin{proposition}\label{prop1}
If a two-qubit maximally entangled state is available as a resource, then the states of (\ref{eq1}) can be perfectly distinguished by LOCC.
\end{proposition}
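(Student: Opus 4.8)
The plan is to attach the resource and recast the task as distinguishing the six mutually orthogonal states $\ket{\Psi_i}_{AB}\otimes\ket{\phi^+}_{A'B'}$ by LOCC across the cut $AA'\,|\,BB'$, where $\ket{\phi^+}_{A'B'}$ is the shared two-qubit maximally entangled state stored in the ancillary qubits $A'$ (Alice) and $B'$ (Bob). The single conceptual ingredient I would exploit is the standard fact that one such e-bit, together with two-way classical communication, implements exactly one nonlocal controlled gate between Alice's and Bob's registers -- equivalently, a nondestructive joint-parity measurement of one chosen qubit observable on each side. The strategy is therefore to construct an explicit sequential protocol in which every step but one is a local measurement followed by classical communication, and the lone genuinely nonlocal step -- the one LOCC alone cannot perform, which is precisely the origin of the Tiles UPB's local indistinguishability -- is realized by consuming $\ket{\phi^+}_{A'B'}$.

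Concretely, I would first choose, for each party, a two-dimensional observable tailored to the Tiles pattern and to the rank-$2$ structure of $\ket{\Psi_6}$, and use the e-bit to measure their joint parity; this splits the six candidates into branches. I would then verify that in each branch the surviving post-measurement states are either mutually orthogonal product states (hence trivially distinguishable by local projective measurements) or reduce to a two-state discrimination problem, which is always solvable by LOCC \cite{Walgate00,Walgate02}. Completing each branch with local measurements and classical communication then identifies the state. Since $\ket{\Psi_6}$ lies in the entangled subspace orthogonal to the five product states of the UPB, the nonlocal parity step must be engineered so that it \emph{simultaneously} flags $\ket{\Psi_6}$ and resolves the UPB ambiguity handled in Cohen's protocol \cite{Cohen08}, rather than spending the resource on two separate nonlocal tasks.

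The hard part will be exactly this last point: establishing that a single Schmidt-rank-$2$ resource is sufficient for the combined problem. Naive teleportation of Alice's qutrit would demand a Schmidt-rank-$3$ resource, so the substance of the claim is that all of the nonlocal content -- both the UPB distinction and the extra separation of the entangled $\ket{\Psi_6}$ from the product states -- can be funneled through one qubit-parity measurement. The main obstacle is the careful choice of the per-party observables together with the branch-by-branch completeness check, confirming that introducing $\ket{\Psi_6}$ neither collides with the outcome pattern of any $\ket{\Psi_j}$ nor forces a second e-bit.
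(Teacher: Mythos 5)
Your high-level strategy points in the same direction as the paper's proof: append the e-bit, let each party measure jointly on their qutrit and their ancilla qubit so that the resource correlates coarse-grainings of the two qutrits, and then finish each branch with local measurements plus the Walgate two-state result. However, as written the proposal has a genuine gap: it defers the entire substantive content of the proposition to an unconstructed object. You explicitly flag ``the careful choice of the per-party observables together with the branch-by-branch completeness check'' as the main obstacle, but that choice and that check \emph{are} the proof. The paper's argument consists precisely of exhibiting the rank-3 projectors $\mathbb{B}_{1,2}$ on $Bb$ (pairing $\{\ket{0}_B,\ket{1}_B\}$ with $\ket{0}_b$ and $\ket{2}_B$ with $\ket{1}_b$, and the complementary pairing), then the rank-3 projectors $\mathbb{A}_{1,2}$ on $Aa$, and then tracking all six states through every branch until each terminates in at most two orthogonal residual states. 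Without these explicit operators, nothing guarantees that $\ket{\Psi_3}$, $\ket{\Psi_5}$, and $\ket{\Psi_6}$ --- which become entangled across the $AA'|BB'$ cut after the first round and must remain mutually orthogonal and non-colliding with the product states in every branch --- actually end up locally distinguishable.

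A second, more conceptual concern: you lean on the fact that one e-bit plus two-way classical communication implements ``exactly one nonlocal controlled gate'' or a joint parity measurement of one qubit observable per side. That primitive is a qubit--qubit statement; here the nonlocal content lives in a $3\otimes 3$ system, and the measurement the paper actually uses is not a parity measurement of a single qubit observable on each side but a pair of \emph{sequential, adaptive} rank-3 projective measurements on the qutrit-plus-ancilla registers, followed by further rounds. It is not obvious (and you do not argue) that the required nonlocal measurement factors through a single qubit-controlled gate, so the reduction to the standard e-bit-per-gate costing does not by itself establish sufficiency of Schmidt rank 2. To close the gap you would need to either exhibit the observables and verify all branches (recovering the paper's proof), or prove that the full discrimination measurement lies in the class implementable with one e-bit, which is a nontrivial claim in its own right.
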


\begin{proof}
To prove this proposition, it is sufficient to provide a protocol which consumes a two-qubit maximally entangled state (MES) and distinguishes the states of (\ref{eq1}) under LOCC. Before presenting the protocol, we introduce the entangled state $\ket{\phi}_{ab} = \ket{00}_{ab}+\ket{11}_{ab}$ (without a factor for normalization), which is shared between Alice and Bob as a resource. Note that the `$a$' part of $\ket{\phi}_{ab}$ belongs to Alice and the `$b$' part of the same belongs to Bob.

Now, the protocol goes as follows. Bob performs a two-outcome projective measurement where each projection operator is a rank-3 operator. They are given by-
\begin{eqnarray}\label{eq2}
& \mathbb{B}_{1} = & \ket{00}_{Bb}\bra{00}+\ket{10}_{Bb}\bra{10}+\ket{21}_{Bb}\bra{21},\nonumber\\
& \mathbb{B}_{2} = & \ket{01}_{Bb}\bra{01}+\ket{11}_{Bb}\bra{11}+\ket{20}_{Bb}\bra{20}.
\end{eqnarray}
We first discuss the case of $\mathbb{B}_{1}$ and exactly the same steps follow for $\mathbb{B}_{2}$. If $\mathbb{B}_1$ is clicked in the measurement process, then the states of (\ref{eq1}) along with the resource state $\ket{\phi}_{ab}$ are transformed as:
\begin{eqnarray}
&\ket{\Psi_1} \rightarrow & \ket{0}_A\ket{0-1}_B\ket{00}_{ab},\nonumber\\
&\ket{\Psi_2} \rightarrow & \ket{0-1}_A\ket{2}_B\ket{11}_{ab},\nonumber\\
&\ket{\Psi_3} \rightarrow & \ket{21}_{AB}\ket{00}_{ab}-\ket{22}_{AB}\ket{11}_{ab},\nonumber\\
&\ket{\Psi_4} \rightarrow & \ket{1-2}_A\ket{0}_B\ket{00}_{ab},\nonumber\\
&\ket{\Psi_5} \rightarrow & \ket{0+1+2}_A(\ket{0+1}_B\ket{00}_{ab}+\ket{2}_{B}\ket{11}_{ab}),\nonumber\\
&\ket{\Psi_6} \rightarrow & \ket{0}_A\ket{0+1}_B\ket{00}_{ab} - \ket{0+1}_A\ket{2}_B\ket{11}_{ab}.
\end{eqnarray}
The second step of our protocol is Alice making a two-outcome projective measurement where each of the projectors is again a rank-3 operator. They are given by-
\begin{eqnarray}\label{eq4}
& \mathbb{A}_{1} = & \ket{00}_{Aa}\bra{00}+\ket{01}_{Aa}\bra{01}+\ket{11}_{Aa}\bra{11},\nonumber\\
& \mathbb{A}_{2} = & \ket{10}_{Aa}\bra{10}+\ket{20}_{Aa}\bra{20}+\ket{21}_{Aa}\bra{21}.
\end{eqnarray}
First, we consider that $\mathbb{A}_1$ is clicked in the measurement process, then the states are transformed as:
\begin{eqnarray}
&\ket{\Psi_1} \rightarrow & \ket{0}_A\ket{0-1}_B\ket{00}_{ab},\nonumber\\
&\ket{\Psi_2} \rightarrow & \ket{0-1}_A\ket{2}_B\ket{11}_{ab},\nonumber\\
&\ket{\Psi_5} \rightarrow & \ket{0}_A\ket{0+1}_B\ket{00}_{ab}+\ket{0+1}_{A}\ket{2}_{B}\ket{11}_{ab},\nonumber\\
&\ket{\Psi_6} \rightarrow & \ket{0}_A\ket{0+1}_B\ket{00}_{ab} - \ket{0+1}_A\ket{2}_B\ket{11}_{ab}.
\end{eqnarray}
On these states, the protocol proceeds as follows. Alice again makes a two-outcome projective measurement to distinguish between two subspaces, on Alice's side one is spanned by $\ket{0-1}_A\ket{1}_a$ and the other is spanned by \{$\ket{0}_A\ket{0}_a$, $\ket{0+1}_A\ket{1}_a$\}. If the first subspace is detected then the state is $\ket{\Psi_2}$, otherwise, Alice and Bob are left with other three states of the above equation, given by-
\begin{eqnarray}
&\ket{\Psi_1} \rightarrow & \ket{0}_A\ket{0-1}_B\ket{00}_{ab},\nonumber\\
&\ket{\Psi_5} \rightarrow & \ket{0}_A\ket{0+1}_B\ket{00}_{ab}+\ket{0+1}_{A}\ket{2}_{B}\ket{11}_{ab},\nonumber\\
&\ket{\Psi_6} \rightarrow & \ket{0}_A\ket{0+1}_B\ket{00}_{ab} - \ket{0+1}_A\ket{2}_B\ket{11}_{ab}.
\end{eqnarray}
Next, Bob does a two-outcome projective measurement to distinguish between two subspaces, on Bob's side one is spanned by $\ket{0-1}_B\ket{0}_b$ and the other is spanned by \{$\ket{0+1}_B\ket{0}_b$, $\ket{2}_B\ket{1}_b$\}. If the first subspace is detected then the state is $\ket{\Psi_1}$, otherwise, they are left with the other two states of the above equation. These two states are orthogonal and can be perfectly distinguished by LOCC as described in Ref.\cite{Walgate00}. 

Going back to (\ref{eq4}), we describe what if $\mathbb{A}_2$ is clicked in the measurement process. It eliminates $\ket{\Psi_1}$, $\ket{\Psi_2}$, and $\ket{\Psi_6}$ completely, and the other states are transformed as:
\begin{eqnarray}
&\ket{\Psi_3}\rightarrow &\ket{21}_{AB}\ket{00}_{ab}-\ket{22}_{AB}\ket{11}_{ab}, \nonumber\\
&\ket{\Psi_4}\rightarrow & \ket{1-2}_A\ket{0}_B\ket{00}_{ab}, \nonumber\\
&\ket{\Psi_5}\rightarrow & \ket{1+2}_A\ket{0+1}_B\ket{00}_{ab}+\ket{2}_A\ket{2}_{B}\ket{11}_{ab}.
\end{eqnarray}
On these states, Bob makes a two-outcome projective measurement to distinguish between two subspaces, on Bob's side one is spanned by $\ket{00}_{Bb}$ and the other is spanned by \{$\ket{10}_{Bb}$, $\ket{21}_{Bb}$\}. On detection of any of these subspaces, Alice and Bob are left with two orthogonal pure states, which can always be distinguished perfectly by LOCC \cite{Walgate00}. This completes the first part of our protocol.

For the second part of our protocol, we go back to (\ref{eq2}) and assume that in the measurement process $\mathbb{B}_2$ is clicked. In this case, similar steps follow as with $\mathbb{B}_1$, as previously described.
\end{proof}

Note that in (\ref{eq1}), we consider a particular entangled state. Here, we emphasize that if the entangled state is changed, then the protocol must be modified accordingly. For example, if we consider the entangled state as $\ket{1+2}\ket{0}-\ket{2}\ket{1+2}$, then Bob must start with a measurement defined by the following projection operators:
\begin{equation*}
\begin{array}{c}
\mathbb{B}_{1} = \ket{00}_{Bb}\bra{00}+\ket{11}_{Bb}\bra{11}+\ket{21}_{Bb}\bra{21},\\
\mathbb{B}_{2} = \ket{01}_{Bb}\bra{01}+\ket{10}_{Bb}\bra{10}+\ket{20}_{Bb}\bra{20}.
\end{array}
\end{equation*}
However, it is currently unknown whether a proposition similar to Proposition \ref{prop1} can be established for any entangled state chosen from the entangled subspace corresponding to the Tiles UPB. Moreover, the above proposition opens up a new direction of research, giving rise to many additional questions. Some of these questions are as follows: (a) How can our protocol be generalized when more than one entangled state is present? (b) In such a case, will a two-qubit maximally entangled state still be effective? (c) What happens if we consider a UPB other than the Tiles UPB? While it is interesting to study these questions, the single instance reported through Proposition \ref{prop1} is sufficient to establish our result regarding entanglement-assisted discrimination of a class of many-copy indistinguishable sets. In the following, we develop this result step by step. 

{\it Set containing a mixed state}.-- Next, we consider a state discrimination problem involving two orthogonal quantum states—one being a mixed state and the other a pure state. The mixed state is the normalized projection operator onto the space spanned by the states of UPB given in (\ref{eq1}), while the pure state is the entangled state also given in the same equation. We denote the mixed state by $\rho(=1/N^\prime\sum_{i=1}^5\ketbra{\Psi_i}{\Psi_i})$ and the pure state by $\ket{\psi}(=\ket{\Psi_6})$; $N^\prime$ is a coefficient for normalization. It is known that the set $\{\rho^{\otimes n}, \ket{\psi}^{\otimes n}\}$ cannot be distinguished by LOCC for any finite $n$ \cite{Bandyopadhyay11}. We now turn to entanglement-assisted discrimination, where the states to be distinguished belong to the set $\{\rho, \ket{\psi}\}$. The goal is to determine whether these states can be perfectly distinguished by LOCC with the aid of entanglement, while consuming minimal entanglement resources. Before addressing this problem, we present a sufficient condition for distinguishing mixed states.

\begin{lemma}\label{suffcond}
Two orthogonal mixed states are perfectly distinguishable by a protocol if the states in the spectral decomposition of the mixed states are perfectly distinguishable by the same protocol. 
\end{lemma}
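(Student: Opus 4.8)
The plan is to exploit the linearity of measurement-outcome probabilities in the input state, combined with a coarse-graining of the protocol's declarations. First I would fix spectral decompositions $\rho_1 = \sum_i p_i \ketbra{\phi_i}{\phi_i}$ and $\rho_2 = \sum_j q_j \ketbra{\chi_j}{\chi_j}$ with all $p_i, q_j > 0$. Since the two mixed states are orthogonal, their supports are orthogonal, so every $\ket{\phi_i}$ is orthogonal to every $\ket{\chi_j}$; together with the orthonormality of each eigenbasis this makes the union $\{\ket{\phi_i}\}\cup\{\ket{\chi_j}\}$ an orthonormal set, and in particular the two families of eigenvectors are disjoint. This disjointness is exactly the consistency needed in the next step.

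Next I would invoke the hypothesis, which supplies a single protocol that perfectly distinguishes all the pure states $\{\ket{\phi_i}\}\cup\{\ket{\chi_j}\}$: fed any one of these eigenvectors, the protocol returns its correct label with certainty. Because the two eigenvector families are disjoint, I can coarse-grain the protocol's output by declaring ``$\rho_1$'' whenever it identifies some $\ket{\phi_i}$ and ``$\rho_2$'' whenever it identifies some $\ket{\chi_j}$. Regarded as an overall measurement, this coarse-grained protocol is described by two effects $E_1, E_2$ satisfying $\bra{\phi_i} E_1 \ket{\phi_i} = 1$ for every $i$ and $\bra{\chi_j} E_2 \ket{\chi_j} = 1$ for every $j$.

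The key step is then linearity of the outcome statistics: the probability that the same protocol declares ``$\rho_1$'' on input $\rho_1$ is
\[
\mathrm{Tr}(E_1 \rho_1) = \sum_i p_i \bra{\phi_i} E_1 \ket{\phi_i} = \sum_i p_i = 1,
\]
and symmetrically $\mathrm{Tr}(E_2 \rho_2) = 1$. Hence each mixed state is identified with certainty by the very protocol that distinguished its pure components, which is the claim.

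I do not anticipate a serious obstacle, since the argument is essentially bookkeeping once linearity is in hand. The one point requiring care is to justify that a fixed LOCC protocol induces a genuine, input-independent POVM on the input system, so that the declaration probabilities really are linear functionals of the input state; this is standard, but it is precisely what licenses substituting the spectral decomposition into $\mathrm{Tr}(E_1 \cdot)$. A secondary, minor point is to confirm via orthogonality that the coarse-graining is unambiguous, i.e.\ that no single eigenvector is assigned to both $\rho_1$ and $\rho_2$.
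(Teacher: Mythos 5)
Your proposal is correct and follows essentially the same route as the paper's proof: both coarse-grain the POVM of the pure-state-distinguishing protocol into two effects ($E_1=\sum_i M^i_{\phi}$, $E_2=\sum_j M^j_{\psi}$ in the paper's notation) and use linearity of the trace over the spectral decomposition to conclude $\mathrm{Tr}(E_1\rho_1)=\mathrm{Tr}(E_2\rho_2)=1$. No substantive differences to report.
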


\begin{proof}
Consider two orthogonal mixed states $\rho_1$ and $\rho_2$ acting on some Hilbert space $\mathcal{H}$ such that the spectral decompositions of the states are given by- $\rho_1=\sum_{i}b_{i}\ketbra{\phi_{i}}{\phi_{i}}$ and $\rho_2=\sum_{i}c_{i}\ketbra{\psi_{i}}{\psi_{i}}$, respectively, with $\langle\phi_i|\phi_j\rangle = \delta_{ij} = \langle\psi_{i}|\psi_{j}\rangle$ and $\langle\phi_{i}|\psi_{j}\rangle=0$. 

Let us now consider a locally implementable measurement $\mathcal{M}$, defined by a set of positive operator valued measure (POVM) elements $\{M^i_{\phi}, M^i_{\psi}\}$, $\sum_iM^i_{\phi}+M^i_{\psi}=\mathbb{I}$, such that $\mathcal{M}$ distinguishes the states $\{\ket{\phi_i}\}\cup\{\ket{\psi_i}\}$ perfectly. So, $\langle\phi_i|M^j_{\phi}|\phi_i\rangle$ = $\delta_{ij}$ = $\langle\psi_i|M^j_{\psi}|\psi_i\rangle$ and $\langle\phi_i|M^j_{\psi}|\phi_i\rangle$ = 0 = $\langle\psi_i|M^j_{\phi}|\psi_i\rangle~\forall~i, j$. These imply that $\mbox{Tr}[\rho_2\sum_iM^i_{\phi}]$ = 0 = $\mbox{Tr}[\rho_1\sum_iM^i_{\psi}]$. Moreover, $\sum_i M^i_{\phi} + M^i_{\psi} = \mathbb{I}$, therefore, $\mbox{Tr}[\rho_1\sum_iM^i_{\phi}]$ = 1 = $\mbox{Tr}[\rho_2\sum_iM^i_{\psi}]$. Clearly, $\mathcal{M}$ also distinguishes the mixed states $\rho_1$ and $\rho_2$ perfectly. This completes the proof. 
\end{proof}

We mention that the sufficient condition provided in the above lemma may not be a necessary condition for distinguishing orthogonal mixed states. Now, we present the following proposition for entanglement-assisted discrimination of the set $\{\rho, \ket{\psi}\}$.  

\begin{proposition}\label{prop2}
The states of the set $\mathcal{S}=\{\rho, \ket{\psi}\}$ can be perfectly distinguished by LOCC if a two-qubit maximally entangled state is available as a resource. 
\end{proposition}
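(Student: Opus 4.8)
The plan is to reduce Proposition~\ref{prop2} to the two ingredients already established: Proposition~\ref{prop1} and Lemma~\ref{suffcond}. The whole argument hinges on the observation that the spectral data of the pair $\{\rho,\ketbra{\psi}{\psi}\}$ coincides exactly with the six states of (\ref{eq1}).

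First I would write down the spectral decompositions. Since the Tiles UPB states $\ket{\Psi_1},\dots,\ket{\Psi_5}$ are mutually orthogonal, after normalization they are precisely the eigenvectors of $\rho=(1/N^\prime)\sum_{i=1}^5\ketbra{\Psi_i}{\Psi_i}$, while the pure state $\ket{\psi}=\ket{\Psi_6}$ trivially has itself as its only spectral component. Hence the union of all eigenvectors appearing across the two decompositions is exactly $\{\ket{\Psi_1},\dots,\ket{\Psi_6}\}$.

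Second, I would check that the hypotheses of Lemma~\ref{suffcond} hold. The orthonormality required within each decomposition follows from the mutual orthogonality of the UPB states, and the cross-orthogonality $\langle\Psi_i|\Psi_6\rangle=0$ for $i=1,\dots,5$ holds because $\ket{\Psi_6}$ is taken from the entangled subspace, i.e.\ the orthogonal complement of the UPB span. Thus $\rho$ and $\ketbra{\psi}{\psi}$ are genuinely orthogonal mixed states whose combined spectral components are exactly the states of (\ref{eq1}).

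Finally, Proposition~\ref{prop1} supplies a concrete LOCC protocol, consuming a single two-qubit maximally entangled state, that perfectly distinguishes all six states of (\ref{eq1}). Feeding this protocol into Lemma~\ref{suffcond} immediately gives that the same protocol, with the same resource, perfectly distinguishes $\rho$ from $\ket{\psi}$, which completes the argument. The one point deserving care — conceptual rather than computational — is to notice that Lemma~\ref{suffcond} demands joint distinguishability of the full union of both states' spectral components, not merely of each state's components separately; Proposition~\ref{prop1} is exactly what certifies this joint distinguishability, so no genuinely hard step remains once the identification is made.
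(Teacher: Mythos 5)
Your proposal is correct and follows essentially the same route as the paper's own proof: reduce to the joint discrimination of the six states of (\ref{eq1}) via Lemma~\ref{suffcond} and then invoke Proposition~\ref{prop1}. The extra care you take in verifying the lemma's orthogonality hypotheses is a welcome but minor elaboration of what the paper leaves implicit.
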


\begin{proof}
Based on the sufficient condition given in Lemma~\ref{suffcond} for distinguishing mixed states, we know that the key quantity of interest is the amount of entanglement required to distinguish the constituent pure states in the decomposition of $\rho$ together with the state $\ket{\psi}$. This task ultimately boils down to distinguishing the states in (\ref{eq1}), where first five states comprise $\rho$ and the sixth state is $\ket{\psi}$. Furthermore, Proposition \ref{prop1} has already established that these states are perfectly distinguishable by LOCC given a two-qubit MES as a resource. This concludes the proof.
\end{proof}

\noindent
In the above proposition, the state $\rho$ is a normalized projection operator on the subspace spanned by the states of UPB. However, this proposition is true for any rank-5 mixed state $\rho^\prime$ which is a convex combination of the states of the UPB. Next, we discuss some consequences that follow from Propositions \ref{prop1} and \ref{prop2}.

\begin{itemize}
\item The set $\{\rho, \ket{\psi}\}$ is a two-qutrit set, but to distinguish its states, a two-qutrit entangled state is not necessary. Clearly, any teleportation-based protocol to distinguish $\rho$ and $\ket{\psi}$ is sub-optimal. The same conclusion also applies to the states of (\ref{eq1}).

\item The set $\{\rho, \ket{\psi}\}$ is {\it strange} -- it remains locally indistinguishable regardless of how many finite copies of the states are available. However, the states of the set are distinguishable by LOCC when a lower dimensional (minimum dimensional) entangled state is available as a resource.

\item Thus, the local indistinguishability of $\{\rho, \ket{\psi}\}$ is robust against having many copies of the states as a resource, but becomes relatively weak when entanglement is available as a resource.

\item The connection we made between Propositions \ref{prop1} and \ref{prop2} is useful to shed light on the entanglement-assisted discrimination of a class of many-copy indistinguishable sets.
\end{itemize}

{\it A hierarchy}.-- We exhibit the hierarchy by considering several locally indistinguishable sets within the same Hilbert space. All of these locally indistinguishable sets are many-copy indistinguishable. However, under entanglement-assisted discrimination scenario, to distinguish the states of these sets, we may need different amounts of resources. Mathematically, we define this hierarchy as follows. Suppose, we consider two distinct classes of many-copy indistinguishable sets. An example for a set from one of these classes is given in Proposition \ref{prop2}, which we denote by $\mathcal{S}$. Example for the other class is denoted by $\mathcal{S}'$. $\mathcal{S}'$ is defined as follows: it also contains only two states like $\mathcal{S}$, one is pure and the other is mixed. The pure state is a maximally entangled state in a given Hilbert space, while the mixed state is a normalized projector onto the complementary subspace of the maximally entangled state. Such sets can be found in \cite{Yu14}. We next consider a quantity $E$, the value of which cannot be increased under LOCC. In the setting of entanglement-assisted discrimination, $E$ can be taken as an entanglement monotone \cite{Vidal00}. Suppose that after discriminating the sets $\mathcal{S}$ and $\mathcal{S}'$ using entanglement-assisted LOCC protocols, we find that $E(\mathcal{S}')>E(\mathcal{S})$, where $E(\mathcal{S}')$ is the value of the given entanglement monotone necessary for the discrimination of the set $\mathcal{S}'$, and $E(\mathcal{S})$ is the value of the given entanglement monotone sufficient for the discrimination of the set $\mathcal{S}$. Then, we say that there is a hierarchy between $\mathcal{S}$ and $\mathcal{S}'$. In particular, $\mathcal{S}'$ is more locally indistinguishable than $\mathcal{S}$.

Previously, we have shown that the two-qutrit states of the set $\mathcal{S}$ = $\{\rho, \ket{\psi}\}$ are perfectly distinguishable by LOCC if both parties have access to a two-qubit MES; see Proposition \ref{prop2} in this regard. Now, consider an entangled state $\ket{\phi}$ in the same Hilbert space. This state is a two-qutrit maximally entangled state. Let us consider another state, $\sigma=\frac{1}{8}(\mathbb{I}-\ketbra{\phi})$, $\mathbb{I}$ being the identity operator onto the two-qutrit Hilbert space. It is known that the set $\mathcal{S}^\prime$ = $\{\sigma, \ket{\phi}\}$ is also locally indistinguishable under many-copy scenario \cite{Yu14}. Moreover, to distinguish the states of this set by LOCC, it is necessary to consume an entangled state as resource whose Schmidt rank is at least three \cite{Yu14}. 

So, for two-qutrit systems, we now have two types of sets $\mathcal{S}$ and $\mathcal{S}^\prime$, both of them are locally indistinguishable in many-copy scenario. However, to distinguish the states of $\mathcal{S}$ under entanglement-assisted local state discrimination scenario, it is sufficient to have a two-qubit MES, the Schmidt rank of which is no greater than two. On the other hand, to distinguish the states of $\mathcal{S}^\prime$ under entanglement-assisted local state discrimination scenario, we need an entangled state as resource whose Schmidt rank is no less than three. Now, the Schmidt rank of a bipartite entangled state is a resource, as it cannot be increased under LOCC \cite{Lo01}. Furthermore, it is possible to provide an entanglement measure generalizing the concept of Schmidt rank \cite{Eisert01}. So, what we obtain here is $E(\mathcal{S}')>E(\mathcal{S})$ under entanglement-assisted local state discrimination scenario; $E(\mathcal{S}')$ is Schmidt rank three, which is necessary for the discrimination of the two-qutrit states of the set $\mathcal{S}'$, and $E(\mathcal{S})$ is Schmidt rank two, which is sufficient for the discrimination of the two-qutrit states of the set $\mathcal{S}$. This establishes the hierarchy: the set $\mathcal{S}'$ is more locally indistinguishable than the set $\mathcal{S}$. In the following, we provide some consequences of our findings. 

\begin{itemize}
\item For one locally indistinguishable set ($\mathcal{S'}$), to distinguish its states by LOCC, we need an entangled state from higher dimensional Hilbert space. Interestingly, there exists another locally indistinguishable set ($\mathcal{S})$, to distinguish its states we need an entangled state from lower dimensional Hilbert space. Thus, we exhibit an instance where we have analyzed the role of Schmidt rank in local state discrimination problems. 

\item By using a measurement which is more powerful than LOCC, one may try to exhibit a similar hierarchy. However, such a measurement may not be physically motivated. Thus, to implement these measurements under a class of physically motivated operations like LOCC, we ultimately need entanglement.  
\end{itemize}

It is natural to ask if the analysis that we have presented here is limited to the case of a two-qutrit system or if similar instances can also be found in higher dimensional Hilbert spaces. To address this question, we consider a two-ququad Tiles UPB and show that such an instance can indeed be found in a higher dimensional Hilbert space. Interestingly, we also show that the difference in the degrees of local indistinguishability for different many-copy indistinguishable sets can increase in higher dimensions. 

\subsection{Two-ququad states}
In the following, we first provide an orthogonal product basis, based on which the UPB can be constructed. 

\begin{widetext}
\small
\begin{equation}\label{prodbasisquad}
\begin{array}{llll}
\ket{\Psi_1^{(1)}} = \ket{0}_A\ket{0+1+2}_B, & \ket{\Psi_1^{(2)}} = \ket{0}_A\ket{0+\omega1+\omega^{2}2}_B, & \ket{\Psi_1^{(3)}} = \ket{0}_A\ket{0+\omega^{2}1+\omega2}_B, & \ket{\Psi_2^{(1)}} = \ket{0+1+2}_A\ket{3}_B,\\[1 ex] \ket{\Psi_2^{(2)}} = \ket{0+\omega1+\omega^{2}2}_A\ket{3}_B, & \ket{\Psi_2^{(3)}} = \ket{0+\omega^{2}1+\omega2}_A\ket{3}_B, & \ket{\Psi_3^{(1)}} = \ket{3}_A\ket{1+2+3}_B, & \ket{\Psi_3^{(2)}} = \ket{3}_A\ket{1+\omega2+\omega^{2}3}_B,\\[1 ex] \ket{\Psi_3^{(3)}} = \ket{3}_A\ket{1+\omega^{2}2+\omega3}_B, & \ket{\Psi_4^{(1)}} = \ket{1+2+3}_A\ket{0}_B, & \ket{\Psi_4^{(2)}} = \ket{1+\omega2+\omega^{2}3}_A\ket{0}_B, & \ket{\Psi_4^{(3)}} = \ket{1+\omega^{2}2+\omega3}_A\ket{0}_B,\\[1 ex] \ket{\Psi_5^{(1)}} = \ket{1+2}_A\ket{1+2}_B, & \ket{\Psi_5^{(2)}} = \ket{1+2}_A\ket{1-2}_B, & \ket{\Psi_5^{(3)}} = \ket{1-2}_A\ket{1+2}_B, & \ket{\Psi_5^{(4)}} = \ket{1-2}_A\ket{1-2}_B.
\end{array}
\end{equation}
\end{widetext}
\normalsize
where $\ket{0+\omega1+\omega^{2}2}$ stands for $\frac{1}{\sqrt{3}}(\ket{0}+\omega\ket{1}+\omega^{2}\ket{2})$ etc., and $\omega$ is the cube root of unity. Let us now consider a product state: 
\begin{equation}
\ket{\Psi_6}= \ket{0+1+2+3}_A\ket{0+1+2+3}_B.
\end{equation}
Now, $\mathcal{U}=\mathcal{S}\cup\{\ket{\Psi_6}\}\setminus\{\ket{\Psi_{i}^{(1)}}\}_{i=1}^5$ is a UPB in a two-ququad system ($4 \otimes 4$), $\mathcal{S}$ is the product basis given in Eq.\eqref{prodbasisquad}. The proof that $\mathcal{U}$ is a UPB follows from \cite{Shi20}. Let us now consider an entangled state $\ket{\Psi_7}\in\mathcal{U^{\perp}}$ as
\begin{equation}
\ket{\Psi_7} = \ket{0}_A\ket{0+1+2}_B-\ket{0+1+2}_A\ket{3}_B. 
\end{equation}
Then, for the set $\mathcal{U} \bigcup \{\ket{\Psi_7}\}$, we prove the following:

\begin{proposition}\label{prop3}
If a two-qubit MES is available as a resource, then the states of the set $\mathcal{U} \bigcup \{\ket{\Psi_7}\}$ can be perfectly distinguished by LOCC.
\end{proposition}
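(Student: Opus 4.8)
The plan is to extend the protocol from the proof of Proposition~\ref{prop1} to the richer block structure of the two-ququad construction. As before, I would share the two-qubit maximally entangled state $\ket{\phi}_{ab}=\ket{00}_{ab}+\ket{11}_{ab}$, with Alice holding $a$ and Bob holding $b$. The guiding observation is that the bipartition $\{0,1,2\}$ versus $\{3\}$ on each local ququad space plays exactly the role that $\{0,1\}$ versus $\{2\}$ played in the qutrit case, and that $\ket{\Psi_7}=\ket{0}_A\ket{0+1+2}_B-\ket{0+1+2}_A\ket{3}_B$ is the direct analogue of the entangled $\ket{\Psi_6}$ of (\ref{eq1}): it is a coherent superposition of a term in the ``$\Psi_1$ region'' (Alice in $\ket{0}$, Bob in $\{0,1,2\}$) and a term in the ``$\Psi_2$ region'' (Alice in $\{0,1,2\}$, Bob in $\ket{3}$).

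First I would have Bob perform a two-outcome projective measurement with rank-$4$ projectors that coherently couple his local subspaces to the resource qubit $b$: one projector supports $\{\ket{0},\ket{1},\ket{2}\}_B\otimes\ket{0}_b$ together with $\ket{3}_B\otimes\ket{1}_b$, and the other supports the complementary strings. The essential point, exactly as in Proposition~\ref{prop1}, is that the projector is chosen so that both terms of $\ket{\Psi_7}$ survive it---each paired with the complementary component of $\ket{\phi}_{ab}$---so the measurement does not collapse the superposition; instead it sends
\begin{equation*}
\ket{\Psi_7}\longrightarrow\ket{0}_A\ket{0+1+2}_B\ket{00}_{ab}-\ket{0+1+2}_A\ket{3}_B\ket{11}_{ab},
\end{equation*}
tagging the two branches by the resource and thereby transferring the which-branch information to Alice's qubit $a$. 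Every product state of $\mathcal{U}$ whose Bob component lies entirely within one block (the $\ket{\Psi_1^{(j)}}$, $\ket{\Psi_2^{(j)}}$, $\ket{\Psi_4^{(j)}}$ and $\ket{\Psi_5^{(j)}}$ families) is merely tagged deterministically, whereas the boundary-crossing states $\ket{\Psi_3^{(j)}}$ and $\ket{\Psi_6}$ split across the resource tag in the same controlled manner as $\ket{\Psi_7}$. Alice then performs the symmetric rank-$4$ measurement coupling her own blocks to the qubit $a$.

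From this point I would proceed exactly as in Proposition~\ref{prop1}: a sequence of two-outcome projective measurements, alternating between Alice and Bob, that peel off the small invariant sectors created by the tagging until every surviving ambiguity involves at most two mutually orthogonal states. Each such leaf is then closed using the fact that any two orthogonal pure states are perfectly distinguishable by LOCC \cite{Walgate00}. In particular the three surviving states $\ket{\Psi_5^{(2)}},\ket{\Psi_5^{(3)}},\ket{\Psi_5^{(4)}}$ form part of a product basis of the $\{1,2\}_A\otimes\{1,2\}_B$ subspace and so are themselves LOCC-distinguishable without further resource. Running both of Bob's initial outcomes symmetrically then separates all thirteen states while consuming only $\ket{\phi}_{ab}$, establishing the claim.

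The main obstacle is bookkeeping rather than a new idea. Relative to the six-state qutrit case one must now track thirteen states through a deeper measurement tree, with an extra two-dimensional ``$\Psi_5$'' block and three boundary-crossing families $\ket{\Psi_3^{(j)}}$, $\ket{\Psi_6}$, $\ket{\Psi_7}$ rather than one. The delicate point is to verify that a \emph{single} two-qubit tag suffices to resolve all these coherent branches at once---i.e.\ that no leaf of the tree is ever forced to distinguish three or more mutually nonorthogonal states, which would demand a higher-Schmidt-rank resource---and that Alice's and Bob's projectors can be chosen consistently so that the tag disentangles $\ket{\Psi_7}$ on both sides simultaneously. Checking that these two-sided projections indeed terminate in orthogonal pairs at every branch is where the work lies.
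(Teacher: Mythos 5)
Your opening move coincides with the paper's: the same resource state, and the same first measurement by Bob (rank-$4$ projectors pairing the block $\{\ket{0},\ket{1},\ket{2}\}_B$ with one ancilla state and $\ket{3}_B$ with the other, so that both branches of $\ket{\Psi_7}$ survive with complementary resource tags). The overall peel-off strategy you describe is also the paper's. However, the one concrete step you specify beyond that --- Alice performing ``the symmetric rank-$4$ measurement coupling her own blocks to the qubit $a$'' --- fails, and the failure is not mere bookkeeping. Any such block measurement necessarily separates $\ket{3}_A\otimes\ket{0}_a$ from $\ket{3}_A\otimes\ket{1}_a$, and therefore collapses the two branches of the boundary-crossing states $\ket{\Psi_3^{(2)}}=\ket{3}_A\ket{1+\omega 2+\omega^2 3}_B$ and $\ket{\Psi_3^{(3)}}=\ket{3}_A\ket{1+\omega^2 2+\omega 3}_B$ into separate outcomes. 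In the outcome retaining only the $\ket{3}_B$ branch, both states become proportional to the \emph{same} ray $\ket{3}_A\ket{3}_B\ket{\cdot\cdot}_{ab}$; in the other outcome they become $\ket{3}_A\ket{1+\omega 2}_B$ versus $\ket{3}_A\ket{1+\omega^2 2}_B$, whose overlap is $1+\omega\neq 0$. Their orthogonality lived entirely in the coherence between the two blocks, so once Alice's measurement splits them, no subsequent protocol can recover perfect discrimination.

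The paper avoids this by making Alice's second measurement genuinely asymmetric relative to Bob's: a three-outcome measurement whose third (rank-$4$) projector keeps $\ket{30}_{Aa}$ and $\ket{31}_{Aa}$ \emph{together} (so the $\ket{\Psi_3^{(j)}}$ retain both branches), while the two rank-$2$ projectors split the $\{\ket{0},\ket{1},\ket{2}\}_A\otimes\ket{0}_a$ sector along the three mutually orthogonal $\omega$-superpositions $\ket{0+1+2}$, $\ket{0+\omega 1+\omega^2 2}$, $\ket{0+\omega^2 1+\omega 2}$, which is what separates the $\ket{\Psi_2^{(j)}}$ family from $\ket{\Psi_6}$ and $\ket{\Psi_7}$. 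The remaining cascade then requires several further nontrivial choices (including a final measurement of Alice's ancilla in the $\ket{0\pm 1}_a$ basis to render $\ket{\Psi_3^{(2)}},\ket{\Psi_3^{(3)}},\ket{\Psi_6}$ orthogonal on Bob's side). You correctly flag that verifying termination in orthogonal pairs ``is where the work lies,'' but that verification is the entire content of the proposition, and the measurement you propose for the step you do specify would already break it. As written, the proposal is a plausible plan with a genuine gap, not a proof.
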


\begin{proof}
Similar to the proof of Proposition \ref{prop1}, we introduce an ancillary system $\ket{\phi}_{ab} = \ket{00}_{ab} + \ket{11}_{ab}$ (without normalization), where Alice and Bob hold a qubit each. Our protocol then goes as follows, Bob performs a two-outcome projective measurement, where each measurement operator is a rank-4 projector given as:
\small
\begin{equation}
\begin{array}{c}
\mathbb{B}_{1} =  \ket{01}_{Bb}\bra{01}+\ket{11}_{Bb}\bra{11}+\ket{21}_{Bb}\bra{21}+\ket{30}_{Bb}\bra{30},\\[1 ex]

\mathbb{B}_{2}  =  \ket{00}_{Bb}\bra{00}+\ket{10}_{Bb}\bra{10}+\ket{20}_{Bb}\bra{20}+\ket{31}_{Bb}\bra{31},
\end{array}
\end{equation}
\normalsize
where we see that $\mathbb{B}_{2}$ differs from $\mathbb{B}_{1}$ only in the state of the ancillary system ($0_{b}\leftrightarrow1_{b})$. So, we discuss only the case of $\mathbb{B}_{1}$ and a similar method follows for $\mathbb{B}_{2}$. If $\mathbb{B}_{1}$ is clicked in the measurement process, then the states in the set $\mathcal{U}\bigcup\{\ket{\Psi}_{7}\}$ together with the resource state $\ket{\phi}_{ab}$ transform as:
\begin{widetext}
\begin{equation}
\begin{array}{ll}
\ket{\Psi_1^{(2)}} \rightarrow  \ket{0}_A\ket{0+\omega1+\omega^{2}2}_B\ket{11}_{ab}, & \ket{\Psi_1^{(3)}} \rightarrow  \ket{0}_A\ket{0+\omega^{2}1+\omega2}_B\ket{11}_{ab}, \\[1 ex] 

\ket{\Psi_2^{(2)}} \rightarrow  \ket{0+\omega1+\omega^{2}2}_A\ket{3}_B\ket{00}_{ab}, & \ket{\Psi_2^{(3)}} \rightarrow  \ket{0+\omega^{2}1+\omega2}_A\ket{3}_B\ket{00}_{ab}, \\[1 ex] 

\ket{\Psi_3^{(2)}} \rightarrow  \ket{3}_A\left(\ket{1+\omega2}_B\ket{11}_{ab}+\ket{\omega^{2}3}_B\ket{00}_{ab}\right), & \ket{\Psi_3^{(3)}} \rightarrow  \ket{3}_A\left(\ket{1+\omega^{2}2}_B\ket{11}_{ab}+\ket{\omega3}_B\ket{00}_{ab}\right), \\[1 ex]

\ket{\Psi_4^{(2)}} \rightarrow  \ket{1+\omega2+\omega^{2}3}_A\ket{0}_B\ket{11}_{ab}, & \ket{\Psi_4^{(3)}} \rightarrow  \ket{1+\omega^{2}2+\omega3}_A\ket{0}_B\ket{11}_{ab}, \\[1 ex] 

\ket{\Psi_5^{(2)}} \rightarrow  \ket{1+2}_A\ket{1-2}_B\ket{11}_{ab}, & \ket{\Psi_5^{(3)}} \rightarrow \ket{1-2}_A\ket{1+2}_B\ket{11}_{ab}, \\[1 ex]

\ket{\Psi_5^{(4)}} \rightarrow \ket{1-2}_A\ket{1-2}_B\ket{11}_{ab}, & \ket{\Psi_6} \rightarrow  \ket{0+1+2+3}_A\left(\ket{0+1+2}_B\ket{11}_{ab}+\ket{3}_{B}\ket{00}_{ab}\right), \\[1 ex] 

\ket{\Psi_7} \rightarrow \ket{0}_A\ket{0+1+2}_B\ket{11}_{ab}-\ket{0+1+2}_A\ket{3}_{B}\ket{00}_{ab}. &
\end{array}
\end{equation}
\end{widetext}
\normalsize
The second step of our protocol is Alice making a three-outcome measurement with two measurement operators being rank-2 projectors and the third one being "everything else",
\begin{align}
&\mathbb{A}_{1}  = \ket{00+\omega10+\omega^{2}20}_{Aa}\bra{00+\omega10+\omega^{2}20}&\nonumber\\
&\hspace{.75cm} +\ket{00+\omega^{2}10+\omega20}_{Aa}\bra{00+\omega^{2}10+\omega20},\nonumber\\
&\mathbb{A}_{2}  = \ket{00+10+20}_{Aa}\bra{00+10+20}+\ket{01}_{Aa}\bra{01},\nonumber\\
&\mathbb{A}_{3}  = \ket{30}_{Aa}\bra{30}+\ket{31}_{Aa}\bra{31}+\ket{11}_{Aa}\bra{11}+\ket{21}_{Aa}\bra{21}.\nonumber
\end{align}
Let us first consider that $\mathbb{A}_{1}$ is clicked in the measurement process. Then, the only states that do not get eliminated are $\ket{\Psi_2^{(2)}}$ and $\ket{\Psi_2^{(3)}}$, which remain unchanged. These two states are orthogonal on Alice's side who can make a local measurement to distinguish them. We now consider the case where $\mathbb{A}_{2}$ is clicked in the measurement process. It eliminates $\ket{\Psi_2^{(2)}}$, $\ket{\Psi_2^{(3)}}$, $\ket{\Psi_3^{(2)}}$, $\ket{\Psi_3^{(3)}}$, $\ket{\Psi_4^{(2)}}$, $\ket{\Psi_4^{(3)}}$, $\ket{\Psi_5^{(2)}}$, $\ket{\Psi_5^{(3)}}$, and $\ket{\Psi_5^{(4)}}$ completely, and the remaining states transform as:
\begin{eqnarray}
&\ket{\Psi_1^{(2)}}  \rightarrow & \ket{0}_A\ket{0+\omega1+\omega^{2}2}_B\ket{11}_{ab},\nonumber\\
&\ket{\Psi_1^{(3)}}  \rightarrow & \ket{0}_A\ket{0+\omega^{2}1+\omega2}_B\ket{11}_{ab},\nonumber\\
&\ket{\Psi_6}  \;\;      \rightarrow & \ket{0+1+2}_A\ket{3}_B\ket{00}_{ab}+\ket{0}_A\ket{0+1+2}_B\ket{11}_{ab},\nonumber\\
&\ket{\Psi_7}   \;\;     \rightarrow & \ket{0}_A\ket{0+1+2}_B\ket{11}_{ab}-\ket{0+1+2}_A\ket{3}_B\ket{00}_{ab}.\nonumber
\end{eqnarray}
On these states, Bob again makes a two-outcome projective measurement to distinguish between two subspaces, on Bob's side one is spanned by $\{\ket{01+\omega11+\omega^{2}21}_{Bb}, \ket{01+\omega^{2}11+\omega21}_{Bb}\}$, and the other is spanned by $\{\ket{01+11+21}_{Bb}, \ket{30}_{Bb}\}$. On the detection of either of these two subspaces, Alice and Bob are left with two orthogonal pure states which can always be distinguished perfectly by LOCC \cite{Walgate00}. Let us now consider the third and the last outcome of Alice's measurement, $\mathbb{A}_{3}$. It eliminates $\ket{\Psi_1^{(2)}}$, $\ket{\Psi_1^{(3)}}$, $\ket{\Psi_2^{(2)}}$, $\ket{\Psi_2^{(3)}}$, and $\ket{\Psi_7}$ completely. Then, the remaining states transform as:
\begin{eqnarray}
&\ket{\Psi_3^{(2)}}  \rightarrow & \ket{3}_A(\ket{1+\omega2}_B\ket{11}_{ab}+\ket{\omega^{2}3}_B\ket{00}_{ab}),\nonumber\\
&\ket{\Psi_3^{(3)}}  \rightarrow & \ket{3}_A(\ket{1+\omega^{2}2}_B\ket{11}_{ab}+\ket{\omega3}_B\ket{00}_{ab}),\nonumber\\
&\ket{\Psi_4^{(2)}}  \rightarrow & \ket{1+\omega2+\omega^{2}3}_A\ket{0}_B\ket{11}_{ab},\nonumber\\
&\ket{\Psi_4^{(3)}}  \rightarrow & \ket{1+\omega^{2}2+\omega3}_A\ket{0}_B\ket{11}_{ab},\nonumber\\
&\ket{\Psi_5^{(2)}}  \rightarrow & \ket{1+2}_A\ket{1-2}_B\ket{11}_{ab},\nonumber\\
&\ket{\Psi_5^{(3)}}  \rightarrow & \ket{1-2}_A\ket{1+2}_B\ket{11}_{ab},\nonumber\\
&\ket{\Psi_5^{(4)}}  \rightarrow & \ket{1-2}_A\ket{1-2}_B\ket{11}_{ab},\nonumber\\
&\ket{\Psi_6}  \;\;      \rightarrow & \ket{1+2+3}_A\ket{0+1+2}_B\ket{11}_{ab}+\ket{3}_{A}\ket{3}_{B}\ket{00}_{ab}.\nonumber
\end{eqnarray}
On these states, Bob makes a two-outcome projective measurement to distinguish between two subspaces, on Bob's side one is spanned by $\{\ket{01}_{Bb}\}$, and the other is spanned by $\{\ket{11}_{Bb}, \ket{21}_{Bb}, \ket{30}_{Bb}\}$. If the first subspace is detected, then Alice and Bob are left with the following three states:
\begin{eqnarray}
&\ket{\Psi_4^{(2)}}  \rightarrow & \ket{1+\omega2+\omega^{2}3}_A\ket{0}_B\ket{11}_{ab},\nonumber\\
&\ket{\Psi_4^{(3)}}  \rightarrow & \ket{1+\omega^{2}2+\omega3}_A\ket{0}_B\ket{11}_{ab},\nonumber\\
&\ket{\Psi_6}  \;\;      \rightarrow & \ket{1+2+3}_A\ket{0}_B\ket{11}_{ab}.\nonumber
\end{eqnarray}
These three states are orthogonal on Alice's side who can make a local measurement to distinguish them perfectly. If the second subspace is detected, then the states $\ket{\Psi_4^{(2)}}$ and $\ket{\Psi_4^{(3)}}$ get eliminated, and the remaining states transform as:
\begin{eqnarray}
&\ket{\Psi_3^{(2)}}  \rightarrow & \ket{3}_A(\ket{1+\omega2}_B\ket{11}_{ab}+\ket{\omega^{2}3}_B\ket{00}_{ab}),\nonumber\\
&\ket{\Psi_3^{(3)}}  \rightarrow & \ket{3}_A(\ket{1+\omega^{2}2}_B\ket{11}_{ab}+\ket{\omega3}_B\ket{00}_{ab}),\nonumber\\
&\ket{\Psi_5^{(2)}}  \rightarrow & \ket{1+2}_A\ket{1-2}_B\ket{11}_{ab},\nonumber\\
&\ket{\Psi_5^{(3)}}  \rightarrow & \ket{1-2}_A\ket{1+2}_B\ket{11}_{ab},\nonumber\\
&\ket{\Psi_5^{(4)}}  \rightarrow & \ket{1-2}_A\ket{1-2}_B\ket{11}_{ab},\nonumber\\
&\ket{\Psi_6}  \;\;      \rightarrow & \ket{1+2+3}_A\ket{1+2}_B\ket{11}_{ab}+\ket{3}_A\ket{3}_B\ket{00}_{ab}.\nonumber
\end{eqnarray}
On these six states, Alice makes a three-outcome projective measurement to distinguish between three subspaces, on Alice's side one is spanned by $\{\ket{1
+2}\ket{1}_{Aa}\}$, second by $\{\ket{1-2}\ket{1}_{Aa}\}$, and the third by $\{\ket{30}_{Aa}, \ket{31}_{Aa}\}$. If the first or the second subspace is detected, then Alice and Bob are left with two orthogonal pure states which can always be perfectly distinguished by LOCC \cite{Walgate00}. If the third subspace is detected, then the states $\ket{\Psi_5^{(2)}}$, $\ket{\Psi_5^{(3)}}$, and $\ket{\Psi_5^{(4)}}$ get eliminated completely, and the remaining states transform as:
\begin{eqnarray}
&\ket{\Psi_3^{(2)}}  \rightarrow & \ket{3}_A(\ket{1+\omega2}_B\ket{11}_{ab}+\ket{\omega^{2}3}_B\ket{00}_{ab}),\nonumber\\
&\ket{\Psi_3^{(3)}}  \rightarrow & \ket{3}_A(\ket{1+\omega^{2}2}_B\ket{11}_{ab}+\ket{\omega3}_B\ket{00}_{ab}),\nonumber\\
&\ket{\Psi_6}  \;\;      \rightarrow & \ket{3}_A(\ket{1+2}_B\ket{11}_{ab}+\ket{3}_B\ket{00}_{ab}).\nonumber
\end{eqnarray}
On these states, Alice makes a measurement on her ancillary system in $\{\ket{0\pm1}_a\}$ basis, depending on the outcome of the measurement the states transform as:
\begin{eqnarray}
&\ket{\Psi_3^{(2)}}  \rightarrow & \ket{3}_A\ket{0\pm1}_a(\ket{1+\omega2}_B\ket{1}_{b}\pm\ket{\omega^{2}3}_B\ket{0}_{b}),\nonumber\\
&\ket{\Psi_3^{(3)}}  \rightarrow & \ket{3}_A\ket{0\pm1}_a(\ket{1+\omega^{2}2}_B\ket{1}_{b}\pm\ket{\omega3}_B\ket{0}_{b}),\nonumber\\
&\ket{\Psi_6}  \;\;      \rightarrow & \ket{3}_A\ket{0\pm1}_a(\ket{1+2}_B\ket{1}_{ab}\pm\ket{3}_B\ket{0}_{b}).\nonumber
\end{eqnarray}
These states are now orthogonal on Bob's side, and therefore locally distinguishable. This completes the first part of the protocol. 

For the second part, we assume that $\mathbb{B}_{2}$ is clicked in the measurement process, then similar steps follow as mentioned before. This concludes the proof.
\end{proof}

Building on Proposition \ref{prop2} and the related discussions, and with the aid of Proposition \ref{prop3}, we construct a two-ququad many-copy indistinguishable set whose states can be perfectly distinguished by LOCC when a two-qubit MES is available as a resource. This set, denoted by $\mathbb{S}$, consists of two states: a mixed state given by the normalized projection operator onto the UPB $\mathcal{U}$, and a pure entangled state $\ket{\Psi_7}$. We now proceed to discuss several consequences, outlined below.

\begin{itemize}
\item For both sets, $\mathcal{U} \bigcup \{\ket{\Psi_7}\}$ and $\mathbb{S}$, any teleportation-based protocol to distinguish the constituent quantum states is sub-optimal.

\item It is natural to compare between the sets which belong to the same Hilbert space. Therefore, we consider another set $\mathbb{S}^\prime$, which contains two states. The first one is a two-ququad MES, and the second one is a mixed state which is a normalized projecter onto the orthogonal complement of the two-ququad MES. $\mathbb{S}^\prime$ is known to be locally indistinguishable in many-copy scenario \cite{Yu14}. Again, to distinguish the states of this set under entanglement-assisted local state discrimination setting, it is necessary to have an entangled state of Schmidt rank four \cite{Yu14}. Clearly, compared to this set, for the discrimination of the states of $\mathbb{S}$, we need less resource, in terms of Schmidt rank, when we consider entanglement-assisted local state discrimination setting.

\item The dimension is increased from $\mathcal{S}$ to $\mathbb{S}$ but still the resource state remains the same, i.e., a two-qubit MES is sufficient is to distinguish the states of both sets by LOCC. So, $\Delta E' = [E(\mathbb{S}')-E(\mathbb{S})]>\Delta E = [E(\mathcal{S}')-E(\mathcal{S})]$; where $E(\mathbb{S}')/E(\mathcal{S}')$ is the necessary Schmidt rank for the entanglement-assisted discrimination of $\mathbb{S}'/\mathcal{S}'$ and $E(\mathbb{S})/E(\mathcal{S})$ is the sufficient Schmidt rank for the entanglement-assisted discrimination of $\mathbb{S}/\mathcal{S}$. Clearly, in this case, the difference in degrees of local indistinguishability is increased with increasing dimension.
\end{itemize}

In the following, we present a theoretical model of an information processing protocol that demonstrates an application of our results. In fact, we highlight how the hierarchy established in our analysis plays a useful role in this protocol.

\subsection{Application}
We consider a scenario in which a manager wishes to distribute a piece of two-level classical information securely among spatially separated parties. The information should be distributed in such a way that the distant parties cannot extract the information completely using LOCC. However, if required, there must be a way to fully extract the information using less resource. We refer to this protocol as {\it secure locking of information and its resource-efficient extraction}. In this regard, one can see Ref.~\cite{Goswami23}.

\begin{figure}[t!]
\centering
\includegraphics[scale=0.33]{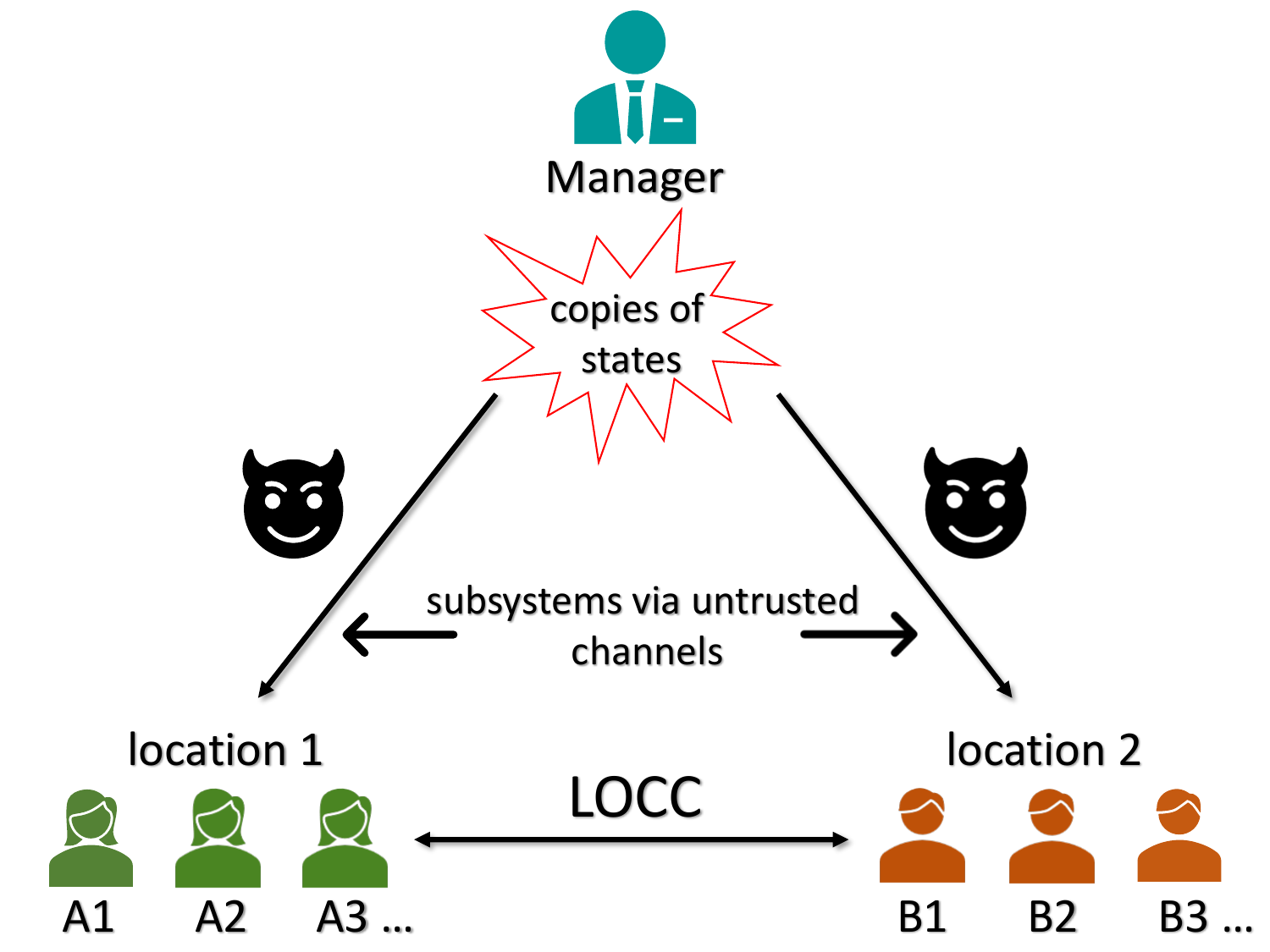}
\caption{Schematic diagram for our information processing protocol.}\label{fig1}
\end{figure}

However, there are other conditions to be addressed in this protocol. We consider two different locations. At both locations, there are multiple observers. At location 1, we consider A1, A2, A3, etc. These observers, though belong to a single location, are disjoint, i.e., they do not pass any information to each other. One may consider here that different `$\mbox{A}i$'s are present at different times. Similarly, at location 2, we consider disjoint observers B1, B2, B3, etc. The Manager wishes to distribute the same piece of two-level classical information between each of the pairs $\{\mbox{A}i, \mbox{B}i\}$, $\forall~i=1,2,3,\dots$, where `$|i|$' is finite. Each pair $(\mbox{A}i,\mbox{B}i)$ is restricted to use only LOCC. The lab of the Manager is connected with both the locations via quantum channels. High dimensional quantum system is allowed to be sent through each of these channels, but each channel is allowed to be used only once. Furthermore, these channels cannot be trusted because there are eavesdroppers acting on both channels. For a schematic diagram of the protocol, see Fig.~\ref{fig1}.

To achieve the goal, the manager proceeds as follows. (S)he produces identical copies of the states of some set, say, $\{\sigma_1^{\otimes n}, \sigma_2^{\otimes n}\}$; where $n$ is an integer equal to the number of pairs $\{\mbox{A}i,~ \mbox{B}i\}$, i.e., $|i|$. The states $\sigma_1$ and $\sigma_2$ are orthogonal to each other. This ensures that complete information extraction is possible. Moreover, the set $\{\sigma_1, \sigma_2\}$ is locally indistinguishable, so that the two-level classical information encoded in the states cannot be extracted via LOCC. The subsystems corresponding to $\{\sigma_1^{\otimes n}, \sigma_2^{\otimes n}\}$
are then transmitted to locations 1 and 2 through a one-time use of the quantum channels. However, since the channels are not trusted, local eavesdroppers can have access to multiple identical copies of the states. Hence, if they use LOCC, they might be successful in information extraction, if the set $\{\sigma_1, \sigma_2\}$ is locally distinguishable in the many-copy limit. We note that this constraint is not purely hypothetical. Recently, perfect discrimination of quantum states, using many copies, was demonstrated experimentally \cite{Zhang23}. However, to overcome this constraint, one can use the set of Proposition \ref{prop2}. This set also provides an additional measure of security as the states of this set cannot be distinguished by LOCC with non-zero probability in a conclusive manner, even if finitely many-copies are available \cite{Bandyopadhyay11}. Now that we have addressed all the constraints of the protocol, we now discuss the extraction of the information by the pairs $(\mbox{A}i,\mbox{B}i)$ using less resource. By Proposition \ref{prop2}, it is established that each $(\mbox{A}i,\mbox{B}i)$ will be able to fully extract the encoded information if they share a two-qubit MES as resource.

Furthermore, as we have seen, there exist other many-copy indistinguishable sets that can be distinguished using entanglement-assisted discrimination. Hence, the manager can also use these other sets (instead of the set of Proposition \ref{prop2}) to execute the protocol. However, for each of these sets, entanglement is the resource to extract the encoded information, and it is always better to use lesser amount of a resource. Now through the hierarchy related discussions, we have already established the following: to distinguish the states of the set of Proposition \ref{prop2} by LOCC, it is sufficient to use lesser resource compared to other sets. In particular, we have talked about the sets $\mathcal{S}$ and $\mathcal{S^\prime}$. So, for the protocol described above, we ask which set is appropriate for achieving the goal of the protocol when both $\mathcal{S}$ and $\mathcal{S^\prime}$ are available. We argue that $\mathcal{S}$ is more appropriate for the protocol, as the information encoded in the states of $\mathcal{S}$ can be decoded by LOCC consuming an entangled state of lesser Schmidt rank. Thus, the goal of the protocol can be achieved efficiently using the set of Proposition \ref{prop2}.

\section{Conclusion}
We explored several new local state discrimination problems. In particular, we considered sets containing the product states of a UPB together with an entangled state and determined the amount of entanglement sufficient to distinguish such states. We then connected this problem to the discrimination of states that remain locally indistinguishable even in the many-copy scenario. These analyses shed light on the role of Schmidt rank as a resource in local state discrimination problems. Specifically, we presented two many-copy indistinguishable sets $S'$ and $S$ for two-qutrit systems. We found that the entangled state necessary to discriminate $S'$ must have Schmidt rank at least three, where as a state with Schmidt rank two is sufficient to discriminate the set $S$. Therefore, as a consequence of these studies, we demonstrated the existence of a hierarchy of local indistinguishability among many-copy indistinguishable sets. We also generalized this result for two-ququad systems. Finally, we provided a theoretical proposal for an information processing protocol where our analysis can be applied. In the results section, we also highlighted several open questions for future investigation.

\section*{Acknowledgments}
A. Srivastav acknowledges the support of the INFOSYS scholarship during the major part of this work. S. Halder was supported with a postdoc fellowship from Harish-Chandra Research Institute, Prayagraj (Allahabad) when this work was started. Then, S. Halder was supported by the ``Quantum Optical Technologies'' project, carried out within the International Research Agendas programme of the Foundation for Polish Science co-financed by the European Union under the European Regional Development Fund.

\bibliographystyle{apsrev4-2}
\bibliography{ref}
\end{document}